\newtheorem{theorem}{Theorem}[section]
\newtheorem{lemma}[theorem]{Lemma}
\newtheorem{remark}[theorem]{Remark}
\newtheorem{assumptions}[theorem]{Assumption}
\newcommand{\argmin}{\mathop{\mathrm{argmin}}}
\begin{document}
\title{Test Error Estimation after Model Selection Using Validation Error}
\author{Leying Guan \thanks{Dept. of  Statistics,
    Stanford Univ,  lguan@stanford.edu}}
\maketitle
\begin{abstract}
When performing supervised learning  with the model selected using validation error from sample splitting and cross validation,  the minimum value of the validation error can be biased downward.  We propose two simple methods that use the errors produced in the validating step to estimate the test error after model selection, and we focus on the situations where we select the model by minimizing the validation error and the randomized validation error. Our methods do not require model refitting,  and the additional computational cost is negligible. 

In the setting of sample splitting, we show that, the proposed test error estimates have biases  of size $o(1/\sqrt{n})$ under suitable assumptions.  We also propose to use  the bootstrap to construct confidence intervals for the test error based on this result.  We apply our proposed methods to a number of simulations and examine their performance.
\end{abstract}

\section{Introduction}
Sample splitting and cross-validation(CV) are widely used in machine learning to choose the value of tuning parameters  in a prediction model. By training and testing the model on separate subsets of data, we get an idea of the model's prediction strength as a function of the tuning parameter, and we choose the parameter to minimize the validation error.

 Although such validation errors(sample splitting validation error and cross-validation error) are unbiased of the test error, the action of selection, for example, selecting the model with minimum validation error,  can cause the nominal validation error for the selected model to be optimistic. This can be problematic if the bias is significant and we need to rely on it to make decisions.  For example,  suppose that we have two different methods $A$ and $B$. $A$ has one tuning parameter while $B$ has five. Which one should we use? If the best model we trained using $A$ has validation error $0.1$ and the best model we trained using $B$ has validation error $0.06$, should we pick $B$ over $A$? Not necessarily. The expected test error of method $B$ associated with its model picking rule can still be higher than that of  the method $A$.

While this phenomenon is ubiquitous, especially with the emerging of many new techniques requiring heavy tuning, there isn't a satisfying solution to it yet. We take CV as an example. \citet{varma2006bias} suggests the use of ``nested" cross-validation to estimate the true test error, which is less efficient because it further divides the data and it is often impractical when the problem size is large. \citet{tibshirani2009bias} proposed a bias correction for the minimum CV error in K-fold cross-validation that directly  computes errors from the individual error curves from each fold. In their simulations, they showed that such a correction can correctly remove the downward bias, though it can sometimes introduce upward bias.

In this paper, we first propose two methods for estimating test error after selecting the model using validation error from sample splitting and then extent them to the CV setting.  The first approach estimates the optimism of the validation error by contrasting, while the second relies on randomization. In the sample splitting setting, We show that  these methods have biases of size $o(1/\sqrt{n})$ under suitable assumptions.  We  can also provide confidence intervals via the bootstrap.  

The paper is organized as follows. We describe the two procedures for test error estimation with sample splitting validation error in section \ref{sec:procedure} and analyze the their biases in section \ref{sec:theorem}.  We describe how to construct the bootstrap confidence interval in section \ref{sec:conf} and extent the proposed methods to the CV setting in section \ref{sec:extension}. We provide simulation results in section \ref{sec:sim} .

\section{Test error estimation with sample splitting validation}
\label{sec:procedure}
Suppose that we have $n$ i.i.d observations $(x_i, y_i)$ from a distribution $F(x, y)$. After sample splitting,  we have a training set $\{(x_i, y_i)| i\in D_{tr}\}$ and a validation set $\{(x_i, y_i)|i\in D_{val}\}$. Let $n$ be the size of $D_{val}$.  Let $L(x, y, \theta)$ be the loss at point $(x, y)$ with parameter $\theta$. Suppose there are $m$ models in total with model indexes being $j = 1,2,\ldots,m$. The $j^{th}$ model is trained by minimizing the loss with the penalization function $g_j(\theta)$:
\begin{equation}
\hat{\theta}_j = \arg\min_{\theta}  \frac{1}{|D_{tr}|}\sum_{i\in D_{tr}}L(x_i, y_i, \theta) + g_j(\theta)
\end{equation}

Let $L_j(x, y)$ be the loss at point $(x, y)$ with this parameter. The estimate of the test error for the model with penalization $g_j(\theta)$ (we refer to this as ``at index $j$") is 
\begin{equation}
Q_j = \frac{1}{n}\sum_{i\in D_{val}} L_j(x_i, y_i)
\end{equation}

At index $j$, this is an unbiased estimate of the test error:
\begin{equation}
\rm{Err}_{j} := E\left [ L_j(x, y)\right], \;\;(x, y) \sim F(x, y)
\end{equation}

Based on a criterion $\mathcal{R}$, we use the validation errors  to pick an index $j$  among the $m$ candidates.  We say $j\in \mathcal{R}$ if we have  picked the index $j$  under $\mathcal{R}$. The test error of criterion $\mathcal{R}$ is defined as
\begin{equation}
\rm{Err}(\mathcal{R}) := E\left[\sum^m_{j=1}\rm{Err}_{j}\mathbb{I}(j\in \mathcal{R})\right]
\end{equation}
In this paper, we will consider two criteria: ($\mathcal{R}$) the criterion which picks $j$ minimizing $Q_j$, and ($\widetilde{\mathcal{R}}$) the criterion which picks $j$ minimizing a randomized validation error to be defined later. Notice that we can also consider other rules,  for example, the one sigma rule (\cite{friedman2001elements}).

\subsection{Test error estimation with $\mathcal{R}$ }
Test error estimation with $\mathcal{R}$ is straightforward and it consists of three steps.

\noindent\rule{1\textwidth}{0.4pt}\\
\vspace{-.3cm}\\
\centerline{\bf Test error estimation with $\mathcal{R}$}
\begin{enumerate}
\item Input the $n\times m$ error matrix $L_j(x_i, y_i)$. Divide the validation errors into $K$ folds with equal size $\frac{n}{K}$ by column and the partition is $\cup_{k=1}^K S_k$. By default, $K = 2$ and the folds are created randomly.
\item Let $Q^k_j$ be the mean validation error using model $j$ with validation data from fold $k$:
$$Q^k_j =\frac{K}{n}\sum_{i\in S_k} L_j(x_i, y_i)$$ 
\item Let $j^*_k$ be the index minimizing $Q^k_j$, and let $j^*$ be the index minimizing $Q_j$. We propose the following bias correction for test error estimation:
$$\hat{\Delta} = \frac{1}{K\sqrt{K}}\sum^K_{k=1}\big(\frac{\sum_{l\neq k} Q^l_{j^*_k}}{K-1} - Q^k_{j^*_k}\big)$$
The estimated test error $\widehat{Q}(\mathcal{R})$ is given below:
$$\widehat{Q}(\mathcal{R}) =  Q_{j^*}+\hat{\Delta}$$
\end{enumerate}
\vspace{-.5cm}
\noindent\rule{1\textwidth}{0.4pt}

\subsection{Test error estimation with $\widetilde{\mathcal{R}}$}
Let $Q = (Q_1, \ldots ,  Q_m)$. We define two sequences of randomized pseudo errors,
\begin{align*}
& \widetilde{\rm{Q}}^{\alpha}(\epsilon, z)= Q+\frac{\epsilon}{\sqrt{n}}+\sqrt{\frac{\alpha}{n}}z,\\
&\widetilde{\rm{Q}}^{\frac{1}{\alpha}}(\epsilon, z) = Q+\frac{\epsilon}{\sqrt{n}}-\sqrt{\frac{1}{n\alpha}}z
\end{align*}
where  $\alpha$ is a small constant,  $\epsilon{\sim} \mathcal{N}(0,\sigma^2_0\bold{I})$, $z\sim N(0,\hat{\Sigma}+\sigma^2_0\bold{I})$ with $\sigma^2_0$ also being a small constant and $\hat{\Sigma}$ being an estimation of $\Sigma$, the underlying covariance structure of validation errors across different models. The criterion $\widetilde{\mathcal{R}}$ picks $j$ minimizing the randomized validation errors $\widetilde{\rm{Q}}^{\alpha}_j(\epsilon, z)$.  Test error estimation with $\widetilde{\mathcal{R}}$ is given below.

\noindent\rule{1\textwidth}{0.5pt}\\
\vspace{-.3cm}\\
\centerline{\bf Test error estimation with $\widetilde{\mathcal{R}}$}
\begin{enumerate}
\item  Input the $n\times m$ error matrix $L_j(x_i, y_i)$ and parameters $\alpha$, $\sigma^2_0$ , $H$ and $\hat{\Sigma}$. By default, we set $\alpha = 0.1$, $H = 100$. The default $\Sigma$ estimate is the sample covariance matrix:
$$\hat{\Sigma}_{j,j'} = \frac{\sum_{i\in D_{val}}(L_j(x_i, y_i) - Q_j)(L_{j'}(x_i, y_i) - Q_{j'})}{n}$$
The default $\sigma^2_0$ is set to be the smallest diagonal element of $\hat{\Sigma}$.
\item Generate  $H$ samples of the additive noise pair $(\epsilon, z)$.
\item At the $h^{th}$ round, let  $(\epsilon_h, \; z_{h})$ be the random vector generated and $j^*_h$  be the index chosen.  The proposed estimate of $\rm{Err}(\widetilde{\mathcal{R}})$ is $\widehat{Q}(\widetilde{\mathcal{R}}) $ given by

 $\widehat{Q}(\widetilde{\mathcal{R}}) = \frac{1}{H}\sum^H_{h=1} \widetilde{\rm{Q}}^{\frac{1}{\alpha}}_{j^*_h}(\epsilon_h, z_h)$
\end{enumerate}
\vspace{-.5cm}
\noindent\rule{1\textwidth}{0.4pt}

\section{Bias analysis for sample splitting validation}
\label{sec:theorem}
Throughout, we condition on $D_{tr}$ so that the training data is deterministic. We assume $m$ to be fixed while $n\rightarrow \infty$.  The training data may change with $n$ and we have in mind the setting where $D_{tr}$ is often of size $O(n)$. The multivariate CLT implies the following Lemma, whose proof we omit.

\begin{lemma}
\label{lem:normality}
Suppose $\sigma^2_j = \rm{Var}(L_j(x,y))\in (0, C)$ for a constant $C > 0$.  Let $Z_j = \sqrt{n}(Q_j - \rm{Err}_j)$, 

(a)For any $Z_{j}$ and $Z_{j'}$, if $\lim_{n\rightarrow\infty}\;E(\|Z_j - Z_{j'}\|^2_2) >  c$ for a constant $c > 0$, then  $Z_j - Z_{j'} $ is asymptotically normal with positive variance.

(b) $(Z_1, \ldots, Z_m)$ is asymptotically normal with bounded covariance $\Sigma$.
\end{lemma}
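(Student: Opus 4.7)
The idea is to reduce both parts to standard CLT statements applied to i.i.d.\ sums, since conditional on $D_{tr}$ the validation observations are i.i.d. For each $i \in D_{val}$ define the centered per-observation loss vector
$$W_i = \bigl(L_1(x_i, y_i) - \mathrm{Err}_1, \ldots, L_m(x_i, y_i) - \mathrm{Err}_m\bigr)^\top \in \R^m.$$
These are i.i.d.\ with mean zero and covariance matrix $\Sigma$ whose $(j,j')$ entry is $\Cov(L_j(x,y), L_{j'}(x,y))$. The hypothesis $\sigma_j^2 < C$ combined with Cauchy--Schwarz gives $|\Sigma_{j,j'}| \le C$, so $\Sigma$ has bounded entries. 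By construction $\sqrt{n}(Q_j - \mathrm{Err}_j) = \tfrac{1}{\sqrt{n}} \sum_{i=1}^n (W_i)_j$, so $(Z_1,\dots,Z_m)^\top = \tfrac{1}{\sqrt{n}}\sum_{i=1}^n W_i$.

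For part (b), applying the multivariate CLT to $W_1,\ldots,W_n$ yields $\tfrac{1}{\sqrt{n}}\sum_i W_i \Rightarrow \mathcal{N}(0,\Sigma)$, which is exactly the asymptotic normality of $(Z_1,\ldots,Z_m)$ with bounded covariance. For part (a), observe that
$$Z_j - Z_{j'} = \frac{1}{\sqrt{n}} \sum_{i=1}^n \Bigl[\bigl(L_j(x_i,y_i) - L_{j'}(x_i,y_i)\bigr) - \bigl(\mathrm{Err}_j - \mathrm{Err}_{j'}\bigr)\Bigr],$$
so $Z_j - Z_{j'}$ is $\tfrac{1}{\sqrt n}$ times a sum of i.i.d.\ scalar mean-zero random variables whose common variance equals $\Var(L_j(x,y) - L_{j'}(x,y)) = E[(Z_j-Z_{j'})^2]$. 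By hypothesis, this quantity is eventually bounded below by $c > 0$, so the univariate CLT delivers a nondegenerate normal limit.

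\textbf{Main obstacle.} The only nontrivial point is that the training data is allowed to change with $n$, so the per-sample distribution of $L_j$ really depends on $n$ and the $W_i$ form a triangular array rather than a fixed i.i.d.\ sequence. One therefore needs the Lindeberg--Feller version of the multivariate CLT and must check a Lindeberg condition for each coordinate (joint normality then follows from the Cram\'er--Wold device). The bounded variance assumption $\sigma_j^2 < C$ is by itself not enough to imply Lindeberg in a triangular array; the cleanest way forward is to assume (or verify in the application of interest) a uniform $2 + \delta$ moment bound on $L_j^{(n)}(x,y)$, from which Lindeberg follows via Lyapunov's condition. In many practical situations the loss is actually bounded (e.g.\ $0$--$1$ loss for classification), and the condition is automatic; this is almost certainly the implicit reason the authors judged the proof routine and omitted it.
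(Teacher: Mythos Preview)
Your proposal is correct and follows exactly the approach the paper itself indicates: the paper states explicitly that ``The multivariate CLT implies the following Lemma, whose proof we omit,'' so your reduction to an i.i.d.\ sum and invocation of the (multivariate) CLT is precisely what the authors had in mind. Your discussion of the triangular-array subtlety and the need for a Lindeberg/Lyapunov check is a genuine refinement over what the paper says---the paper simply suppresses this point---and your observation that bounded loss (as in the 0--1 classification examples used in the simulations) makes the condition automatic is the right way to reconcile the gap.
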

\subsection{Selection with $\mathcal{R}$}
\begin{assumptions}
\label{ass:ass1} 
The good models and bad models are separated: Let $j_0 = \arg\min_j \rm{Err}_j$, $J_{good} := \{j| \sqrt{n}(\rm{Err}_j - \rm{Err}_{j_0}) \rightarrow 0 \}$ and $J_{bad} :=\{j|\frac{\sqrt{n}}{\log n}(\rm{Err}_{j}- \rm{Err}_{j_0}) \rightarrow \infty \}$, we have
\begin{equation}
J_{good}\cup J_{bad} =\{1,2,\ldots,m\}
\end{equation}
\end{assumptions}
\begin{remark}
In practice, the differences between $\rm{Err}_1, \ldots, \rm{Err}_m$ may decrease with $n$, since it is common to use a finer grid for the tuning parameter as n grows. For example, in lasso regression, it is common to use a grid of size $\frac{1}{\sqrt{n}}$ for the penalty parameter.
\end{remark}

Let $\Delta_{out}(n)$ be the expected bias of nominal validation error after selection according to $\mathcal{R}$ with a validation set of size $n$, and let $\Delta := E(\widehat{\Delta})$, the expectation of our  bias estimate. Recall the expression of $\widehat{\Delta}$:
$$\widehat{\Delta} = \frac{1}{K\sqrt{K}}\sum^K_{k=1}\big(\frac{\sum_{l\neq k} Q^l_{j^*_k}}{K-1} - Q^k_{j^*_k}\big)$$
For a given fold $k$, $j^*_k$ is the best index selected using a validation set of size $\frac{n}{K}$, and $\frac{\sum_{l\neq k} Q^l_{j^*_k}}{K-1}$ is the unbiased error estimate using validation errors from other folds. By definition, we have
\begin{equation}
\Delta = \frac{1}{\sqrt{K}}\Delta_{out}(\frac{n}{K})
\end{equation}
 
\begin{theorem}
\label{thm:scale}
Suppose $\sigma^2_j = \rm{Var}(L_j(x,y))\in (0, C)$ for a constant $C > 0$. Then under Assumption \ref{ass:ass1}, we have that the test error $\Delta_{out}(.)$, as a function of the size of validation set, satisfies the following relationship with high probability:
\begin{equation}
\sqrt{n}\Delta_{out}(n) - \sqrt{\frac{n}{K}}\Delta_{out}(\frac{n}{K})  \rightarrow 0 
\end{equation}
As a result,
\begin{equation}
\sqrt{n}(E[\widehat{Q}(\mathcal{R})] - \rm{Err}(\mathcal{R}))\rightarrow 0
\end{equation}
\end{theorem}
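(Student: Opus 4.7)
The plan is to reduce the first display to a statement about the limiting normal vector from Lemma \ref{lem:normality}, and show that the relevant functional of this limit does not depend on whether the validation sample has size $n$ or $n/K$. Let $j^*(n) := \arg\min_j Q_j$ computed from a validation set of size $n$, and set $Z_j := \sqrt{n}(Q_j - \mathrm{Err}_j)$. Since each $Q_j$ is unbiased, $\sqrt{n}\,\Delta_{out}(n) = \sqrt{n}\,E[\mathrm{Err}_{j^*(n)} - Q_{j^*(n)}] = -E[Z_{j^*(n)}]$, so the first display is equivalent to showing $E[Z_{j^*(n)}]$ and $E[Z_{j^*(n/K)}]$ share a common limit. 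The second display is then a one-line algebraic consequence: combining $\Delta = \Delta_{out}(n/K)/\sqrt{K}$ with $\Delta_{out}(n) = \mathrm{Err}(\mathcal{R}) - E[Q_{j^*(n)}]$ yields $\sqrt{n}(E[\widehat{Q}(\mathcal{R})] - \mathrm{Err}(\mathcal{R})) = \sqrt{n/K}\,\Delta_{out}(n/K) - \sqrt{n}\,\Delta_{out}(n)$.

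Next I would localize the selection to $J_{good}$. By Chebyshev and $\sigma_j^2 \leq C$, $P(|Q_j - \mathrm{Err}_j| > \log n /(2\sqrt{n})) \leq 4C/\log^2 n \to 0$. For $j \in J_{bad}$, Assumption \ref{ass:ass1} gives $\mathrm{Err}_j - \mathrm{Err}_{j_0} \gg \log n /\sqrt{n}$, so on the intersection of the high-probability events for $j$ and $j_0$ the ordering $Q_j > Q_{j_0}$ holds. A union bound over the fixed set of $m$ indices yields $P(j^*(n) \in J_{bad}) = o(1)$, and the identical argument applies with $n/K$ in place of $n$.

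To identify the limit on $J_{good}$, note that for $j \in J_{good}$ we have $\sqrt{n}(\mathrm{Err}_j - \mathrm{Err}_{j_0}) \to 0$. Writing $\tilde{j} := \arg\min_{j \in J_{good}} Z_j$, the inequality $Q_{j^*(n)} \leq Q_{\tilde{j}}$ rearranges to $Z_{j^*(n)} \leq Z_{\tilde{j}} + \sqrt{n}(\mathrm{Err}_{\tilde{j}} - \mathrm{Err}_{j^*(n)}) = \min_{j \in J_{good}} Z_j + o(1)$, and $Z_{j^*(n)} \geq \min_{j \in J_{good}} Z_j$ by definition, so $Z_{j^*(n)} = \min_{j \in J_{good}} Z_j + o_P(1)$ on the good event. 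Lemma \ref{lem:normality}(b) gives $(Z_j)_{j \in J_{good}} \Rightarrow Z^*$, a multivariate normal with covariance $\Sigma$ that is a property of the loss distribution alone; hence the limit is the same for validation sizes $n$ and $n/K$. The continuous mapping theorem then yields $\min_{j \in J_{good}} Z_j \Rightarrow \min_{j \in J_{good}} Z^*_j$, with Lemma \ref{lem:normality}(a) guaranteeing that this limit is a non-degenerate, well-defined random variable.

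The main obstacle is the final step: upgrading weak convergence to convergence of expectations. Since $E[Z_j^2] = \sigma_j^2 \leq C$, the family $\{\min_{j \in J_{good}} Z_j\}_n$ is uniformly square-integrable, so weak convergence transfers to expectations. For the bad-model event, Cauchy--Schwarz gives
\begin{equation*}
\bigl|E[Z_{j^*(n)}\,\mathbb{I}(j^*(n)\in J_{bad})]\bigr| \leq \sqrt{\sum_j E[Z_j^2]\cdot P(j^*(n)\in J_{bad})} = o(1).
\end{equation*}
Combining, $E[Z_{j^*(n)}] \to E[\min_{j \in J_{good}} Z^*_j]$, and the same limit holds for $E[Z_{j^*(n/K)}]$, proving the first display; the algebraic identity above then gives the second. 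The careful uniform-integrability control using the bounded-variance hypothesis is what makes the bias estimate accurate to order $o(1/\sqrt{n})$, rather than only $O(1/\sqrt{n})$ as a bare CLT would give.
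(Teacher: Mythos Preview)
Your argument is correct and is in fact cleaner than the paper's. Both proofs rest on the same two ingredients---the CLT for $(Z_j)_j$ from Lemma~\ref{lem:normality} and the localization $P(j^*\in J_{bad})\to 0$ coming from Assumption~\ref{ass:ass1}---but they organize the remaining work differently. The paper never identifies a common limit; instead it writes
\[
\sqrt{n}\,\Delta_{out}(n)-\sqrt{n/K}\,\Delta_{out}(n/K)
=E\Bigl[\sum_{j} Z_j\bigl(\mathbb{I}_{U_1(Z,j)}-\mathbb{I}_{U_2(Z,j)}\bigr)\Bigr]
\]
after passing to the limiting Gaussian $Z$ via a tailored uniform-integrability lemma (Lemma~A.1), then groups indices into equivalence classes $A_j=\{j':\lim E(Z_j-Z_{j'})^2=0\}$ and runs a three-case analysis to show that the symmetric difference of the selection events $\cup_{j'\in A_j}U_1$ and $\cup_{j'\in A_j}U_2$ has vanishing probability. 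Your approach bypasses all of this by observing directly that, on the good event, $Z_{j^*}=\min_{j\in J_{good}}Z_j+o(1)$ deterministically, and then appealing to the continuous mapping theorem together with the second-moment bound $E[(\min_j Z_j)^2]\le \sum_j\sigma_j^2\le mC$ to pass to expectations. This is shorter and avoids both Lemma~A.1 and the equivalence-class bookkeeping; the price is only that one must check (as you do) that the limiting covariance $\Sigma$ is the same at sample sizes $n$ and $n/K$, which is immediate since we condition on $D_{tr}$.

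Two small remarks. First, your invocation of Lemma~\ref{lem:normality}(a) to guarantee that $\min_{j\in J_{good}}Z_j^*$ is ``non-degenerate'' is unnecessary: the continuous mapping theorem and the uniform-integrability argument go through regardless of whether $\Sigma$ is singular, and indeed the paper's equivalence-class construction is there precisely to handle the singular case, which you sidestep entirely. Second, in your Cauchy--Schwarz display the bound should read $\sqrt{E[Z_{j^*}^2]}\cdot\sqrt{P(j^*\in J_{bad})}\le \sqrt{mC}\cdot o(1)$; the expression as written places $P(\cdot)$ under the wrong square root, but the intended estimate is clear and correct.
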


\subsection{Selection with $\widetilde{\mathcal{R}}$}
Note that if $\sqrt{n}(Q-{\rm Err}) \sim N(0, \Sigma)$ and $\Sigma$ is known, then $\widetilde{Q}^{\alpha}$ and  $\widetilde{Q}^{\frac{1}{\alpha}}$ are independent by construction. Hence, $\widetilde{Q}^{\frac{1}{\alpha}}$ may be used to estimate the test error of the model selected using $\widetilde{Q}^{\alpha}$. The idea of randomized model selection has been studied before  (\cite{dwork2008differential,tian2015selective}), and this trick of constructing independent variables has been discussed  in \cite{harris2016prediction}.  By Lemma \ref{lem:normality},  we see that $\sqrt{n}(Q-{\rm Err})\sim N(0, \Sigma)$ asymptotically. This yields the following Theorem.
\begin{theorem}
\label{thm:remainder}
Suppose $\sigma^2_j = \rm{Var}(L_j(x,y))\in (0, C)$ for a constant $C > 0$ and $\|\hat{\Sigma} - \Sigma\|_{\infty}\rightarrow 0$. Then
\begin{equation}
\sqrt{n}(E[\widehat{Q}(\widetilde{\mathcal{R}})] -  \rm{Err}(\widetilde{\mathcal{R}})) \rightarrow 0
\end{equation}
\end{theorem}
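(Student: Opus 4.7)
The plan is to first establish the key structural fact that, under an oracle Gaussian model where $\sqrt{n}(Q-\mathrm{Err})\sim N(0,\Sigma)$ exactly and $\hat\Sigma = \Sigma$, the two pseudo-error vectors are independent. Concretely, since $\epsilon$ and $z$ are independent of $Q$ and of each other, the centered rescaled vectors
\begin{equation*}
\sqrt{n}(\widetilde{Q}^\alpha - \mathrm{Err}) = \sqrt{n}(Q-\mathrm{Err}) + \epsilon + \sqrt{\alpha}\,z, \qquad \sqrt{n}(\widetilde{Q}^{1/\alpha} - \mathrm{Err}) = \sqrt{n}(Q-\mathrm{Err}) + \epsilon - \alpha^{-1/2}\,z
\end{equation*}
are jointly Gaussian with cross-covariance $\Sigma + \sigma_0^2 I - (\Sigma + \sigma_0^2 I) = 0$, hence independent. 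Since $j^* = \arg\min_j \widetilde{Q}^\alpha_j$ is a function of $\widetilde{Q}^\alpha$ only and each coordinate of $\widetilde{Q}^{1/\alpha}$ has mean $\mathrm{Err}_j$, conditioning on $j^*$ gives $E[\widetilde{Q}^{1/\alpha}_{j^*}] = E[\mathrm{Err}_{j^*}] = \mathrm{Err}(\widetilde{\mathcal{R}})$, so that the oracle bias vanishes exactly.

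Next I would quantify the two sources of error that take us away from this oracle. First, $\hat\Sigma$ is only estimated: couple the actual $z \sim N(0,\hat\Sigma+\sigma_0^2 I)$ with an oracle $\tilde z \sim N(0,\Sigma+\sigma_0^2 I)$ through a common standard Gaussian seed. Since $\|\hat\Sigma - \Sigma\|_\infty \to 0$ and the loss variances are uniformly bounded, the induced perturbation of $\widetilde{Q}^{1/\alpha}_{j^*}$ in mean is $o(1)/\sqrt{n}$ on a high-probability event of $\hat\Sigma$. Second, the non-Gaussianity of $Q$: Lemma \ref{lem:normality} gives $\sqrt{n}(Q-\mathrm{Err})\Rightarrow N(0,\Sigma)$; because the randomization is Gaussian of matching scale $1/\sqrt{n}$, it mollifies the otherwise discontinuous $\arg\min$. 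A Lindeberg/Stein swap, replacing the coordinates of $\sqrt{n}(Q-\mathrm{Err})$ one at a time by their Gaussian counterparts, upgrades weak convergence of $Q$ to $o(1/\sqrt{n})$ control of $E[\widetilde{Q}^{1/\alpha}_{j^*}] - E[\mathrm{Err}_{j^*}]$, with uniform integrability supplied by the bound $\sigma_j^2 < C$.

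The main obstacle is precisely this Gaussian-approximation step, since $Q \mapsto j^*(\widetilde{Q}^\alpha)$ is not continuous. The saving grace is that the added noise acts as a Gaussian mollifier on exactly the same $1/\sqrt{n}$ scale as the CLT fluctuation of $Q$, so that after integrating out $\epsilon$ and $z$ the smoothed criterion $Q \mapsto E[\widetilde{Q}^{1/\alpha}_{j^*}\mid Q]$ inherits bounded derivatives of every order in $\sqrt{n}(Q-\mathrm{Err})$; the small constants $\alpha$ and $\sigma_0^2$ ensure these bounds are finite. Plugging this smoothness into the standard multivariate Lindeberg replacement then produces the required $o(1/\sqrt{n})$ bias rate, and combining it with the $\hat\Sigma$ coupling described above completes the argument. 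Carrying out the swap cleanly, in particular verifying that the remainder terms in the third-order Taylor expansion of the smoothed criterion are uniformly $o(1/\sqrt{n})$, is the bulk of the technical work.
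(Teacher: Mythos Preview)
Your structural insight---that in the exact-Gaussian, known-$\Sigma$ oracle the two pseudo-error vectors are independent, so the oracle bias vanishes---is exactly the one the paper uses. Where you diverge is in how you pass from the pre-limit to the Gaussian limit. The paper does not use Lindeberg swapping or mollification arguments; instead it proves a short lemma (Lemma~\ref{lem:thm1}) saying that if $x_n\Rightarrow z\sim N(0,\Sigma)$ with $E\|x_n\|_2^2$ asymptotically bounded, then $E[x_n g(x_n)]\to E[zg(z)]$ for bounded $g$, via Portmanteau plus a truncation for uniform integrability. It then applies this lemma directly to $x_n=\sqrt{n}(t^{\alpha},t^{1/\alpha})$ and $g$ the selection indicator, obtaining both the Gaussian approximation and the $\hat\Sigma\to\Sigma$ correction in one stroke (the latter is absorbed into the weak limit rather than handled by a separate coupling). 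This is considerably lighter than your route: no smoothness of the selection map is needed beyond the fact that the limiting Gaussian puts zero mass on ties, and only second moments are used.

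Your Lindeberg plan is viable but heavier, and two points deserve care. First, you write ``replacing the coordinates of $\sqrt{n}(Q-\mathrm{Err})$ one at a time''; the $m$ coordinates are not independent, so the swap must be over the $n$ i.i.d.\ summands $L(x_i,y_i)-\mathrm{Err}$, not over coordinates. Second, a third-order Taylor remainder bound of the kind you describe would in general require third moments of $L_j(x,y)$, which the theorem does not assume (only $\sigma_j^2<C$); you would need to truncate and invoke a Lindeberg-type condition to stay within the stated hypotheses. Both are fixable, but the paper's Portmanteau-plus-truncation lemma sidesteps them entirely.
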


The proofs of Theorem \ref{thm:scale} and Theorem \ref{thm:remainder} are given in section \ref{app:proof}.

\section{Confidence interval construction}
\label{sec:conf}
We generate $B$ bootstrap samples of $\widehat{Q}(\mathcal{R})$ and $\widehat{Q}(\widetilde{\mathcal{R}})$, and write them as $\widehat{Q}(\mathcal{R})_b$ and $\widehat{Q}(\widetilde{\mathcal{R}})_b$, for $b=1,2,\ldots, B$. Let $Q(\mathcal{R}) = \frac{\sum^B_{b=1}Q(\mathcal{R})_b}{B}$ and  $Q(\widetilde{\mathcal{R}}) = \frac{\sum^B_{b=1}Q(\widetilde{\mathcal{R}})_b}{B}$ where $Q(\mathcal{R})_b$ and $Q(\widetilde{\mathcal{R}})_b$ are the original validation errors of the selected model(s) at $b^{th}$ repetition under criterion $\mathcal{R}$ and $\widetilde{\mathcal{R}}$ respectively. We approximate the confidence intervals with coverage $(1-\alpha)$ for $\rm{Err}(\mathcal{R})$ and $\rm{Err}(\widetilde{\mathcal{R}})$ by 
\[
 [\widehat{Q}(\mathcal{R})+a_1-\frac{1}{\sqrt{n}\log n}, \widehat{Q}(\mathcal{R})+b_1+\frac{1}{\sqrt{n}\log n}]
\]
\[
 [\widehat{Q}(\widetilde{\mathcal{R}})+a_2-\frac{1}{\sqrt{n}\log n}, \widehat{Q}(\widetilde{\mathcal{R}})+b_2+\frac{1}{\sqrt{n}\log n}]
\]
where $(a_1, b_1)$, $(a_2, b_2)$  are the lower and upper $\frac{\alpha}{2}$ quantiles of the two bootstrap distributions $\{\widehat{Q}(\mathcal{R})_b - Q(\mathcal{R})\}$ and $\{\widehat{Q}(\widetilde{\mathcal{R}})_b - Q(\widetilde{\mathcal{R}})\}$.

Note that such a construction is better than a construction using the uncorrected validation error. For example, as a result of the bias-correction, the difference of the mean of $\widehat{Q}^b(\widetilde{\mathcal{R}}) -Q(\widetilde{\mathcal{R}})$ and the mean of $Q(\widetilde{\mathcal{R}}) - {\rm Err}(\widetilde{\mathcal{R}})$ is of size $o(\frac{1}{\sqrt{n}})$, while it is usually not true using uncorrected nominal validation error after selection. The bootstrap distribution of the corrected one is more similar to the true distribution in this sense.

In the confidence intervals constructed above,  we have added $\frac{1}{\sqrt{n}\log n}$ to the upper and lower boundaries of the usual Bootstrap  confidence intervals to account for the potential small order bias in finite samples. As $n\rightarrow \infty$,  this extra length will diminish compared with the total interval length.

\section{Extension to CV}
\label{sec:extension}
In CV setting, We divide the data into $K$ folds, $\cup_{k=1}^K S_k$, and perform cross validation to select a model from $m$ candidates.  The $j^{th}$ model for fold $k$ is trained by minimizing the loss with the penalization function $g_j(\theta)$, with  the $k^{th}$ fold left out:
\begin{equation}
\hat{\theta}^k_j = \argmin_{\theta}  \sum_{i\notin S_k}L(x_i, y_i, \theta) + g_j(\theta),\;\forall j = 1,2,\ldots, m
\end{equation}
Let $L^k_j(x, y)$ be the loss at point $(x, y)$ with this parameter. The validation error in the CV setting is defined as:
\[
Q_j = \frac{1}{n}\sum^K_{k=1}\sum_{i\in S_k} L^k_j(x_i, y_i)
\]
We are interested in 
\begin{align*}
&\rm{Err}(\mathcal{R}) := E\left[\sum^m_{j=1}\rm{Err}_{j}\mathbb{I}(j\in \mathcal{R})\right]\\
&\rm{Err}(\widetilde{\mathcal{R}}) := E\left[\sum^m_{j=1}\rm{Err}_{j}\mathbb{I}(j\in \mathcal{R})\right]
\end{align*}
where $\rm{Err}_{j} := E\left [ \frac{1}{K}\sum^K_{k=1} L^k_j(x, y)\right]$, with a new independent sample $(x, y) \sim F(x, y)$. For both test error estimations, we use same procedures as in section \ref{sec:procedure}, with CV error matrix $L_j^k(x_i, y_i)$ being the input error matrix. For the test error estimation with $\mathcal{R}$, instead of using the default partition in the first step, we use the partition $\cup_{k=1}^K S_k$ from CV and the We can also construct bootstrap confidence interval as described in section \ref{sec:conf} with a slight adjustment:   We sample with replacement the validation errors $L^k_j(x_i, y_i)$ within each fold to keep the structure of CV.

\section{Simulations}
\label{sec:sim}
We consider three types of error:
\begin{enumerate}
\item Under $\mathcal{R}$, the nominal validation error $Q_{j^*}$ for the selected model, denoted as A1.
\item Under $\mathcal{R}$, the bias-corrected test error estimate $\widehat{Q}(\mathcal{R}) := Q_{j^*}+\hat{\Delta}$, denoted as A2.
\item Under $\widetilde{\mathcal{R}}$, the randomized test error estimate $\widehat{Q}(\widetilde{\mathcal{R}}) := \frac{1}{H}\sum^H_{h=1}\widetilde{Q}^{\frac{1}{\alpha}}(\epsilon_h, z_h)$,  denoted as A3.
\end{enumerate}
The parameter $\alpha$ for $\widetilde{\mathcal{R}}$ is fixed at $\alpha = 0.1$, and the number of repetitions $H$ is fixed at $H= 100$. For the bootstrap confidence intervals, we set the number of bootstrap repetitions as $B = 1000$, and  fix the coverage at $90\%$. We use validation error to refer to sample splitting validation error or cross validation error according to the context.

In simulation S0, S1, we consider the sample splitting setting and we directly model the validation errors $L_j(x_i, y_i)$. \\

\noindent\textbf{S0: i.i.d validation errors(no signal):} We consider $n = 100$ and $m = 30$, and generate each $L_j(x_i, y_i)$ independently as $N(0,1)$. Here, the true test errors for all models are $0$. \medskip

\noindent\textbf{S1: i.i.d validation errors(with signal):} We consider $n = 100$ and $m = 30$. For each index $j$, we generate the true test error $\mu_j\sim N(0,\frac{1}{n})$, and generate $L_j \sim N(\mu_j, 1)$.
\medskip

 In simulations S2, S3 and S4 below, we consider classification problems and train models using 5-fold CV with the 0-1 loss function.

\medskip

\noindent\textbf{S2: Classification(no signal):} We generate $n$ i.i.d samples $(x_i, y_i)$ as $x_i \sim N(0, I_{p\times p})$, $y_i \sim \rm{Binomial}(0.5)$.  We consider the cases $(n, p) = (100, 10)$ and  $(n, p) = (100, 2000)$, and in each case, we  train models using logistic regression with lasso penalty.  
\medskip

\noindent\textbf{S3: Classification(with signal, orthogonal case):} We generate $n$ i.i.d samples $(x_i, y_i)$ as $x_i \sim N(0, I_{p\times p})$, $y_i \sim \rm{Binomial}(\frac{1}{1+e^{-x^T_i\beta}})$.  We consider the cases $(n, p) = (100, 10)$ and  $(n, p) = (100, 2000)$. In each case,  we let the first 10 features contain the signal $$\beta_j = \left\{\begin{array}{ll}4 & j\leq 10\\0& j>10\end{array}\right.$$ We train models using logistic regression with lasso penalty.
\medskip

\noindent\textbf{S4: Classification(with signal, correlated case):} The same as S3, except with $x_i \sim N(0, \Sigma)$, where $\Sigma$ is a covariance matrix with first order autoregressive structure $\Sigma_{k,k'} = 0.5^{|k-k'|}$.
\medskip

Figure \ref{fig:boxplots_errorMatrix} shows  boxplots for the differences between the estimated and true test errors in S0 and S1 across 1000 simulations. Table \ref{tab:coverage} shows the empirical coverage of the  $90\%$ bootstrap confidence intervals. 

\begin{table}[H]
\centering
\caption{Empirical coverage of the $90\%$ bootstrap confidence interval}
\vskip 0.1in
\begin{tabular}{|r|r|r|}
  \hline
 & S0 & S1 \\ 
  \hline
A2 & 0.97 & 0.91 \\ \hline
  A3 & 0.94 & 0.93 \\ 
   \hline
\end{tabular}

{\em A2,  A3 are de-biased test error estimates for $\mathcal{R}$ and $\widetilde{\mathcal{R}}$ respectively.} 
\label{tab:coverage}
\end{table}

\begin{figure}
\begin{center}
\includegraphics[width=.8\textwidth, height = .6\textwidth]{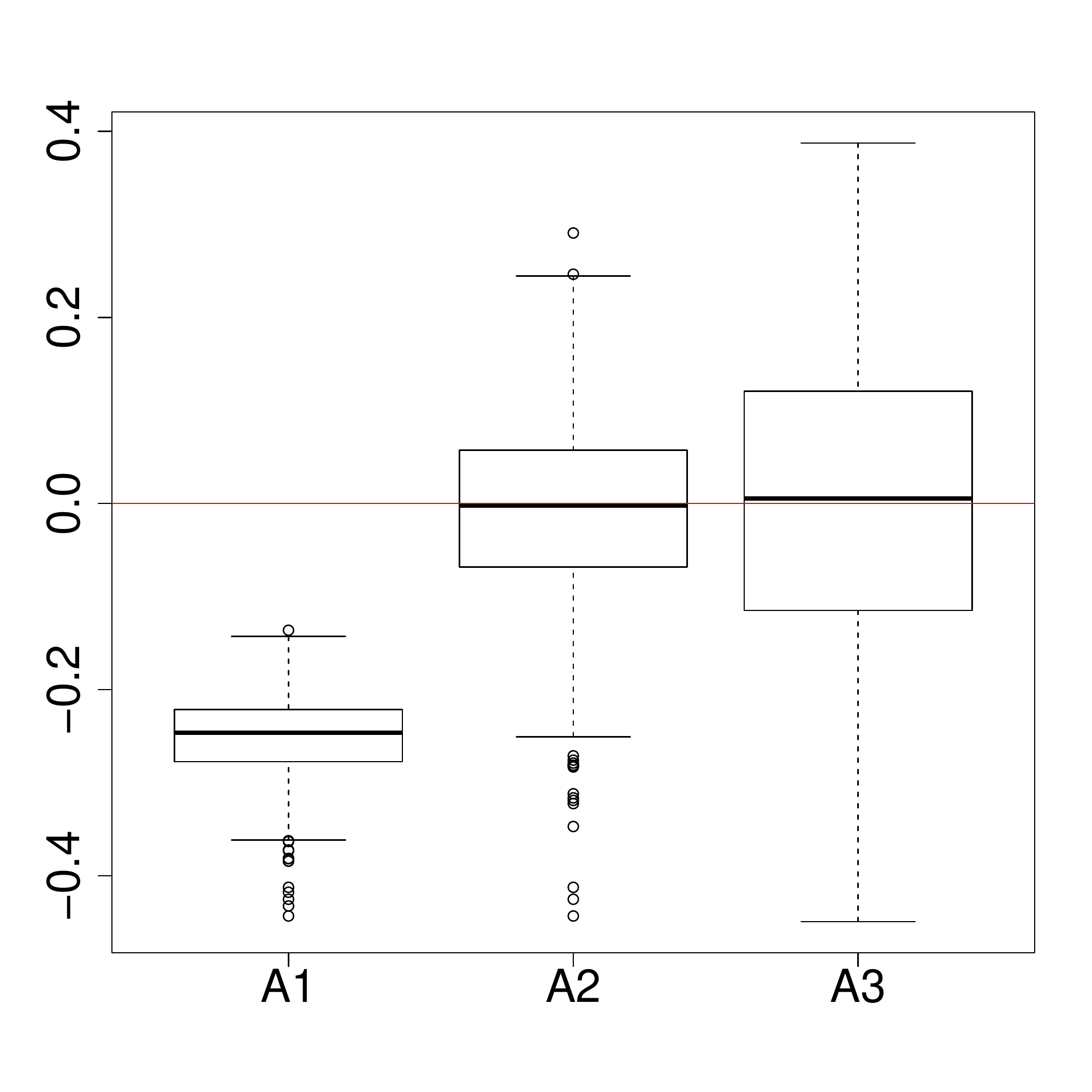}

\includegraphics[width=.8\textwidth, height = .6\textwidth]{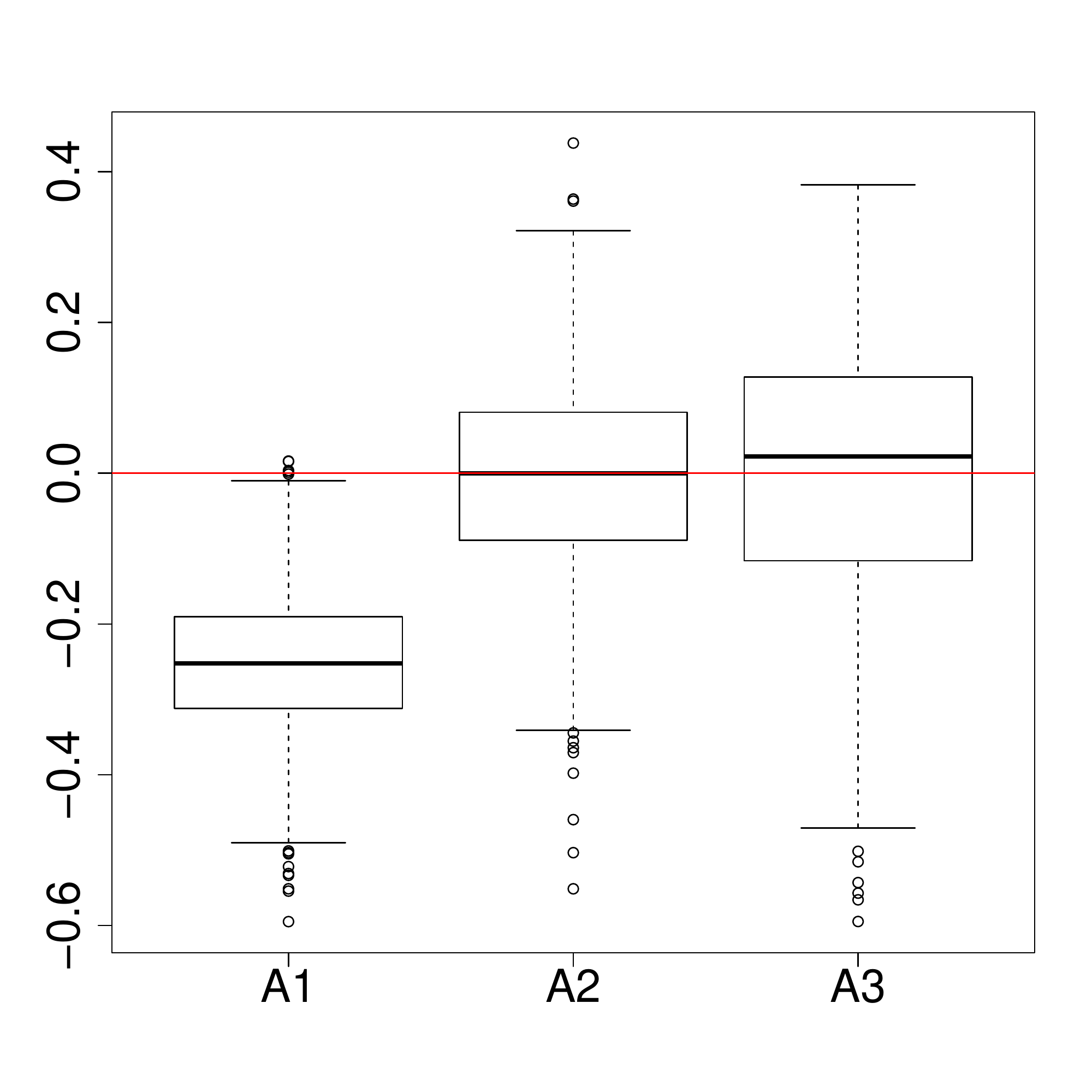}
\caption[boxplots]{\em Results for simulation experiments S0 (top) and S1 (bottom). Box-plots  show differences between the estimated and true test errors. A1, A2, A3 refer to the nominal validation error and de-biased estimates for $\mathcal{R}$ and $\widetilde{\mathcal{R}}$ respectively. We see that the nominal validation error is biased downward significantly.}
\label{fig:boxplots_errorMatrix}
\end{center}
\end{figure}

\noindent Table \ref{tab:logisitc} contains results from simulations S2, S3 and S4. The upper half shows the mean test error and our estimates of them across 1000 simulations. The lower half shows the empirical coverage of the bootstrap confidence intervals. We see that selection using the randomized criterion $\widetilde{\mathcal{R}}$ does not lead to higher test error than selection using $\mathcal{R}$. Figure \ref{fig:boxplots_logistic} shows the box-plots of the differences between estimated error and true test error in these three settings.

\begin{figure}
\begin{center}
\includegraphics[width=\textwidth, height = 1.2\textwidth]{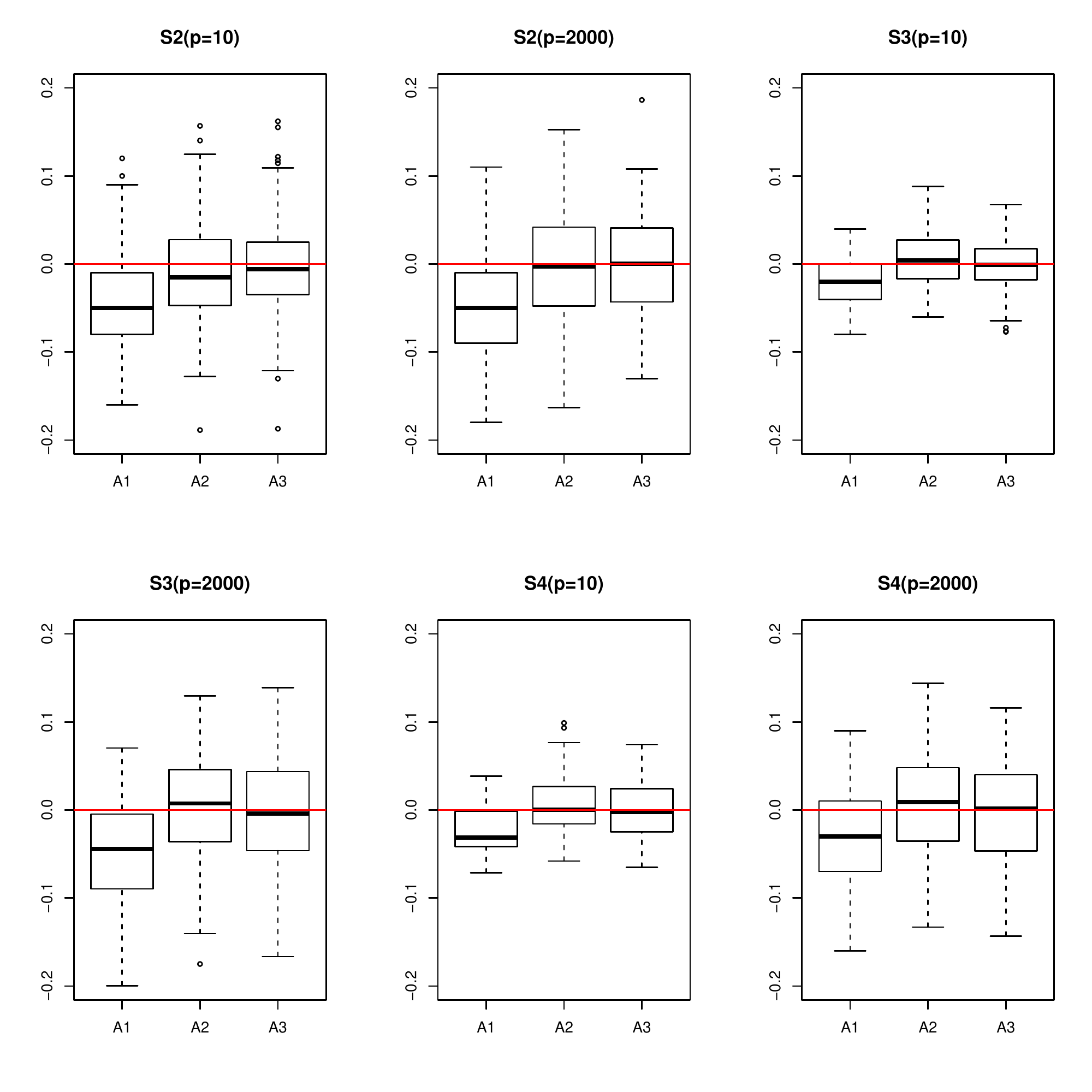}
\caption[boxplots]{\em Results for simulation experiments S2, S3, S4.  See Figure \ref{fig:boxplots_errorMatrix} for details. A2 and A3 correct the downward bias of the nominal validation error A1, which is most severe in the setting S2 with no signal.}
\label{fig:boxplots_logistic}
\end{center}
\end{figure}

\begin{table}[ht]
\centering
\caption{Test error and error estimates for S2, S3, S4}
\label{tab:logisitc}
\vskip 0.1in
\begin{adjustbox}{width=1\textwidth}
\begin{tabular}{r|llllll}
\hline
  \hline
    &&&Errors and error estimates\\ 
\hline
  \hline
 & S2(p=10) & S2(p=2000) & S3(p=10) & S3(p=2000) & S4(p=10) & S4(p=2000) \\  \hline
  Err &  0.5(0) & 0.5(0) & 0.09(0.002) & 0.408(0.004) & 0.077(0.002) & 0.192(0.003) \\   \hline
  ErrRandom &0.5(0) & 0.5(0) & 0.088(0.001) & 0.405(0.003) & 0.074(0.001) & 0.196(0.003) \\   \hline
  A1 & 0.456(0.006) & 0.445(0.006) & 0.071(0.003) & 0.359(0.007) & 0.057(0.002) & 0.163(0.005)  \\   \hline
  A2 & 0.494(0.006) & 0.491(0.006) & 0.099(0.003) & 0.406(0.007) & 0.085(0.003) & 0.202(0.005)  \\   \hline
  A3  & 0.494(0.006) & 0.494(0.006) & 0.086(0.003) & 0.402(0.007) & 0.072(0.003) & 0.196(0.005)\\   \hline
   \hline
   \hline
   &&&Coverage\\
   \hline
   \hline
    & S2(p=10) & S2(p=2000) & S3(p=10) & S3(p=2000) & S4(p=10) & S4(p=2000) \\ 
  \hline
  A2 & 0.90 & 0.89 & 0.95 & 0.89 & 0.95 & 0.91 \\ \hline
  A3 & 0.90 & 0.87 & 0.96 & 0.87 & 0.97 & 0.93 \\   \hline
  \hline
\end{tabular}
\end{adjustbox}
\flushleft
{\em The upper half shows the mean test error and error estimates across 1000 simulations, with estimated standard deviations of these mean values in parentheses. Err and ErrRandom are the test errors after selection using $\mathcal{R}$ and $\widetilde{\mathcal{R}}$.  A1, A2, A3 are as before. The lower half shows the coverage of $90\%$ bootstrap intervals.}
\end{table}

\noindent\textbf{Further simulations:} 

Here we follow the setup in \cite{tibshirani2009bias}, and we consider only the case where $p \gg n$. The features were generated as Gaussian with  $n = 40, p = 1000$. There were two classes of equal size. We created two settings: ``no signal",  in which all features were from $N(0,1)$, and ``signal", where the mean of the first 10\% of the features was shifted to be 0.2 units higher in class 2. In each of these settings we applied three different classifiers: NSC(nearest shrunken centroids),  CART (classification and regression trees), KNN ($K$-nearest neighbors), and we call this simulation S5. Table \ref{tab:moreSim} shows results of S5. Similar to the previous simulations and in \cite{tibshirani2009bias}, we see that the bias tends to larger in the``no signal" case, and varies significantly depending on the classifier. 
\begin{table}[ht]
\centering
\caption[boxplots]{\em Results for simulation experiments S5}
\label{tab:moreSim}
\vskip 0.1in
\begin{adjustbox}{width=1\textwidth}
\begin{tabular}{r|llllll}
  \hline
 & Method & A1& A2& Err & A3 & ErrRandom \\ 
  \hline
no Signal & NSC & 0.45(0.019) & 0.532(0.017) & 0.5 & 0.504(0.025) & 0.5 \\   \hline
   & CART & 0.462(0.017) & 0.502(0.014) &0.5 & 0.503(0.014) & 0.5 \\   \hline
   & KNN & 0.425(0.01) & 0.521(0.01) & 0.5 & 0.515(0.013) & 0.5 \\   \hline
  with Signal & NSC & 0.075(0.008) & 0.102(0.009) & 0.1(0.004) & 0.083(0.009) & 0.1(0.008) \\   \hline
   & CART & 0.45(0.008) & 0.488(0.009) & 0.475(0.009) & 0.474(0.008) & 0.475(0.012) \\   \hline
   & KNN & 0.125(0.01) & 0.195(0.006) & 0.188(0.013) & 0.186(0.01) & 0.194(0.01) \\ 
   \hline
\end{tabular}
\end{adjustbox}
\flushleft
{\em  See Table \ref{tab:logisitc} for details.}
\end{table}

\section{Discussion}
We have proposed two methods to estimate the error after model selection using validation error from sample splitting and extended them to the CV setting. We have seen that both bias correction approaches have reasonably good performance in our simulations.  Although one approach uses a randomized CV criterion, we do not observe deterioration of the selected model in test error. Both these methods  require no model refitting and are applicable to high dimensional cases where model fitting is computational expensive. We believe it is helpful to use the proposed error estimation approaches to guard against over-optimism when looking at the validation results after selection. An R package called debiasedCV will soon be available on the public CRAN repository.

\vskip 14pt
\noindent {\large\bf Acknowledgements} The author would like to thank Professor Robert Tibshirani, Zhou Fan and Professor Jonanthan Taylor for their suggestions and feedbacks, especially Professor Robert Tibshirani and Zhou Fan for whose comments have greatly improved the manuscript.
\par

\appendix
\section{Proof of Theorem~\ref{thm:remainder} and Theorem~\ref{thm:scale}}
\label{app:proof}
Lemma \ref{lem:thm1} below is useful in the proof of Theorem \ref{thm:remainder} and Theorem \ref{thm:scale}.
\begin{lemma}
\label{lem:thm1}
Let m be a fixed number and $\{x_n\}$, $z$ be m dimensional vectors such that $z \sim N(0, \Sigma)$ where $\Sigma$ is positive definite and $x_n\overset{D}{\rightarrow} z$, and $E[\|x_n\|^2_2]$ is asymptotically bounded.  For any bounded function $g(.)$, we have
\[
\lim_{n\rightarrow \infty} E[x_n g(x_n)] = E[zg(z)]
\]
\end{lemma}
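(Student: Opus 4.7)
The plan is to combine a uniform integrability argument (supplied by the $L^2$ bound) with the distributional convergence $x_n \overset{D}{\to} z$, in order to pass the limit inside the expectation $E[x_n g(x_n)]$. The classical fact I will invoke is: if $Y_n \overset{D}{\to} Y$ and the family $\{Y_n\}$ is uniformly integrable, then $E[Y_n] \to E[Y]$. I will apply this coordinate-wise to $Y_n = (x_n)_j \, g(x_n)$.

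First I would handle uniform integrability. Since $g$ is bounded by some constant $M := \|g\|_\infty$, each coordinate satisfies $|(x_n)_j \, g(x_n)| \le M \|x_n\|_2$. The hypothesis that $E[\|x_n\|_2^2]$ is asymptotically bounded by some $C > 0$ then gives $\sup_n E\bigl[\bigl((x_n)_j g(x_n)\bigr)^2\bigr] \le M^2 C$, so the family is bounded in $L^2$ and therefore uniformly integrable. This step is routine and essentially just the Cauchy–Schwarz / de la Vallée-Poussin criterion.

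Next I would establish the joint distributional convergence $x_n g(x_n) \overset{D}{\to} z g(z)$. For this I would appeal to the continuous mapping theorem applied to the map $\phi(x) = x \, g(x)$. Here $\phi$ is continuous at every point where $g$ is continuous, and because $z$ has a positive-definite Gaussian law (hence a density), the set of discontinuities of the $g$'s that actually arise in the paper's applications (indicators of argmin events of the form $\{j = \arg\min_{j'} x_{j'}\}$) has Lebesgue, and therefore $P_z$-, measure zero. Under this a.e.-continuity of $g$ with respect to the law of $z$, continuous mapping yields $x_n g(x_n) \overset{D}{\to} z g(z)$. Finally, combining the uniform integrability from the previous paragraph with this distributional convergence gives $E[x_n g(x_n)] \to E[z g(z)]$ coordinate-wise, which is the claim.

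The main obstacle I anticipate is the continuity issue in Step~2: the statement as written asks for ``any bounded function'' $g$, but without at least $P_z$-a.e.\ continuity of $g$ the continuous mapping step can fail. In a careful write-up I would either strengthen the hypothesis to ``bounded and continuous $P_z$-almost everywhere,'' or observe that in every use of this lemma elsewhere in the paper, $g$ is a piecewise-constant indicator whose discontinuity set lies in a finite union of hyperplanes $\{x_j = x_{j'}\}$, which carry zero mass under the Gaussian law $N(0,\Sigma)$ with $\Sigma$ positive definite. With that caveat absorbed, the rest of the proof is a clean two-line combination of uniform integrability and convergence in distribution.
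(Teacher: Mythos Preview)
Your argument is correct and is essentially the clean, packaged version of what the paper does by hand. The paper's proof truncates at a level $C(\delta)$, applies the Portmanteau theorem to the bounded function $x_j g(x)\mathbb{I}[x_j^2 < C(\delta)]$, and controls the tails via Cauchy--Schwarz using the $L^2$ bound; this is precisely the standard proof of the fact you invoke as a black box, namely that uniform integrability together with convergence in distribution yields convergence of expectations. So the two routes coincide at the level of ideas, with yours stated at a higher level of abstraction. Your write-up is also more careful about the continuity hypothesis on $g$: the paper's Portmanteau step tacitly requires the same $P_z$-a.e.\ continuity that you flag, and your observation that in every application $g$ is an argmin indicator whose discontinuity set is a finite union of hyperplanes (hence null under the nondegenerate Gaussian $N(0,\Sigma)$) is exactly the right way to close that gap in both proofs.
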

\begin{proof}
We apply  Portmanteau Theorem and use the fact that that for all $j = 1,2,\ldots,m$, $(x_n)_j$ has asymptotically  bounded variance. $\forall \delta >0$, there exist a constant $C(\delta)$, and an index $n_0$ large enough, such that, for all $j$,
\begin{equation}
\label{eq:tailbound}
 \sqrt{E[z^2_jg^2(z)]}P(z^2_j\geq C(\delta)) < \frac{\delta}{2}, \;\;\forall n>n_0, \sqrt{E[(x_n)^2_jg^2(x_n)]}P((x_n)^2_j\geq C(\delta)) < \frac{\delta}{2}
\end{equation}
Use equation (\ref{eq:tailbound}) and Portmanteau Theorem again:
\begin{align*}
&\lim_{n\rightarrow \infty}\|E[x_ng(x_n)] - E[zg(z)]\|_\infty \\
\leq&\lim_{n\rightarrow \infty}\max_j |E[(x_n)_jg(x_n)\mathbb{I}[(x_n)^2_j<C(\delta)]] - E[z_jg(z)\mathbb{I}[z^2_j<C(\delta)]]|\\
 +&\lim_{n\rightarrow \infty}\max_j |E[(x_n)_jg(x_n)\mathbb{I}[(x_n)^2_j\geq C(\delta)]]|+\max_j|E[z_jg(z)\mathbb{I}[z^2_j\geq C(\delta)]]|\\
=&\lim_{n\rightarrow \infty}\max_j |E[(x_n)_jg(x_n)\mathbb{I}[(x_n)^2_j\geq C(\delta)]]|+\max_j|E[z_jg(z)\mathbb{I}[(x_n)^2_j\geq C(\delta)]]|\\
\leq& \lim_{n\rightarrow \infty}\max_j \sqrt{E[(x_n)^2_jg^2(x_n)]}P((x_n)^2_j\geq C(\delta))+\max_j\sqrt{E[z^2_jg^2(z)]}P(z^2_j\geq C(\delta)) < \delta\\
&\Rightarrow \lim_{n\rightarrow \infty}\|E[x_ng(x_n)] - E[zg(z)]\|_\infty = 0
\end{align*}

\end{proof}

\subsection*{Proof of Theorem \ref{thm:remainder}}
\begin{proof}
Let $t^{\alpha}_j = Q_j - Err_j+\frac{\epsilon_j}{\sqrt{n}}+\sqrt{\frac{\alpha}{n}}z_j$, $t^{\frac{1}{\alpha}} =    Q_j - Err_j+\frac{\epsilon_j}{\sqrt{n}}-\sqrt{\frac{1}{n\alpha}}z_j$, we have that $(t^{\alpha}, t^{\frac{1}{\alpha}})$ is asymptotically multivariate Gaussian.
\begin{equation}
\label{eq:CLT}
\sqrt{n}\left(\begin{array}{l} t^{\alpha}\\ t^{\frac{1}{\alpha}} \end{array}\right)\overset{d}{\rightarrow}\mathcal{N}(0,\left(\begin{array}{ll}\Sigma+\alpha\Sigma+\sigma^2_0I(1+\alpha)&\bold{0}\\ \bold{0}& \Sigma+\frac{1}{\alpha}\Sigma+\sigma^2_0I(1+\frac{1}{\alpha})\end{array}\right))
\end{equation}
Let $\left(\begin{array}{l} u^{\alpha}\\ u^{\frac{1}{\alpha}} \end{array}\right)$ be a multivariate gaussian vector generated from this limiting distribution.

We define $T_j, \;U_j$  as follows
\begin{equation}
T_j =  \{\sqrt{n}t^{\alpha}_j +\sqrt{n}\rm{Err}_j<\sqrt{n}t^{\alpha}_{j'} +\sqrt{n}\rm{Err}_{j'}, \forall j'\neq j\}
\end{equation}
\begin{equation}
U_j = \{u^{\alpha}_j +\sqrt{n}\rm{Err}_j<u^{\alpha}_{j'} +\sqrt{n}\rm{Err}_{j'}, \forall j'\neq j\}
\end{equation}
We have
\begin{align*}
& \lim_{n\rightarrow \infty} n^{\frac{1}{2}}|E[\widehat{Q}(\widetilde{\mathcal{R}})] -  Err(\widetilde{\mathcal{R}})| = \lim_{n\rightarrow \infty}n^{\frac{1}{2}}|E[\sum^m_{j=1} t^{\frac{1}{\alpha}}_j\rm{I}[T_j]]\overset{(a)}{=}  \lim_{n\rightarrow \infty}|E[\sum^m_{j=1} u^{\frac{1}{\alpha}}_j\rm{I}[U_j]] \overset{(b)}{=} 0
\end{align*}
The equation (a) is a direct result applying Lemma \ref{lem:thm1}, and the equation (b) uses the independence between $u^{\frac{1}{\alpha}}$ and $u^{\alpha}$.

\end{proof}
\subsection*{Proof of Theorem \ref{thm:scale}}
\begin{proof}
Let  $Q_{j, 1}$ be the validation error with validation set size $n$, $Q_{j, 2}$ be the validation error with validation set size $\frac{n}{K}$. Let  $Z_{j,1} = \sqrt{n}(Q_{j, 1} - {\rm Err}_j)$,  $Z_{j,2} = \sqrt{\frac{n}{K}}(Q_{j, 2} -  {\rm Err}_j)$,   $U_1(Z, j) := \{Z_{j}+\sqrt{n}{\rm Err}_j < Z_{j'}+\sqrt{n}{\rm Err}_{j'}, \forall j'\neq j\}$ and $U_2(Z, j) := \{Z_{j}+\sqrt{\frac{n}{K}} {\rm Err}_j < Z_{j'}+\sqrt{\frac{n}{K}} {\rm Err}_{j'}, \forall j'\neq j\}$, and $U_1(Z_1,j)$ is the event of selecting index $j$ using the first validation set while $U_2(Z_2, j)$ is the  event of selecting index $j$ using the second validation set. Under Assumption \ref{ass:ass1} and use the tail bound for normal distribution, we have that $\forall j_1\in J_{bad},\; j_2\in J_{good}$, 
\begin{equation}
\label{eq:tailBound0}
\lim_{n\rightarrow 0} P(Z_{j_1}+\sqrt{n} {\rm Err}_{j_1} \leq Z_{j_2}+\sqrt{n} {\rm Err}_{j_2}) = 0
\end{equation}
From Lemma \ref{lem:normality}, there exists an index set $J$, such that for any two indexes  $j$, $j'$, we have the asymptotic joint normality of $(Z_j - Z_{j'})$, and for all index $j'$, there exists an index $j\in J$, such that $\lim_{n\rightarrow \infty} E(Z_j - Z_j')^2 = 0$ and define $A_j :=\{j'|\lim_{n\rightarrow \infty} E(Z_j - Z_j')^2 = 0\}$. 
We know that $Z_{1}$ and $Z_{2}$ converge to the same Gaussian distribution, and let $Z$ be the varianble generated from it. Applying Lemma \ref{lem:thm1}, we have

\begin{align*}
&|\sqrt{n}\Delta_{out}(n) - \sqrt{\frac{n}{K}}\Delta_{out}(\frac{n}{K})|\\
=&|E[\sum^m_{j=1}Z_{j,1}\mathbb{I}_{U_1(Z_1, j)}] - E[\sum^m_{j=1}Z_{j,2}\mathbb{I}_{U_2(Z_2, j)}]|\\
=&|E[\sum_{j\in J}\sum_{j'\in A_j}Z_{j'}(\mathbb{I}_{U_1(Z, j')}-\mathbb{I}_{U_2(Z, j')})]|\\
\leq &|E[\sum_{j\in J}Z_{j}\sum_{j'\in A_j}(\mathbb{I}_{U_1(Z, j')}-\mathbb{I}_{U_2(Z, j')})]|+\sum_{j\in J}\sum_{j'\in A_j}\sqrt{E[(Z_{j'}-Z_{j})^2]}(P(U_1(Z, j')+P(U_2(Z, j'))\\
\leq &|E[\sum_{j\in J}Z_{j}\sum_{j'\in A_j}(\mathbb{I}_{U_1(Z, j')}-\mathbb{I}_{U_2(Z, j')})]|+2\max_{j\in J, j'\in A_j}\sqrt{E[(Z_{j'}-Z_{j})^2]}\\
\end{align*}

The second term goes to $0$ by definition. For the first term, we have
\begin{align*}
&|E[\sum_{j\in J}Z_{j}\sum_{j'\in A_j}(\mathbb{I}_{U_1(Z, j')}-\mathbb{I}_{U_2(Z, j')})]|\\
\leq & \max_j \sqrt{E[Z^2_{j}]}|P(\cup_{j'\in A_j}U_1(Z, j')/\cup_{j'\in A_j}U_2(Z, j'))|+\max_j \sqrt{E[Z^2_{j}]}|P(\cup_{j'\in A_j}U_2(Z, j')/\cup_{j'\in A_j}U_1(Z, j'))|
\end{align*}
 and we only need to show that, $\forall j\in J$
\begin{align*}
&|P(\cup_{j'\in A_j}U_1(Z, j')/\cup_{j'\in A_j}U_2(Z, j'))|\rightarrow 0\\
&|P(\cup_{j'\in A_j}U_2(Z, j')/\cup_{j'\in A_j}U_1(Z, j'))| \rightarrow 0
\end{align*}
Note that 
\begin{align*}
&\cup_{j'\in A_j}U_1(Z, j')/\cup_{j'\in A_j}U_2(Z, j')  \\
=&\{\exists j'\in A_j, \; Z_j' < \min_{l\notin A_j} Z_l + \sqrt{n}({\rm Err}_l - {\rm Err}_{j'}), \forall j'\in A_j,  Z_{j'} \geq  \min_{l\notin A_j} Z_l + \sqrt{\frac{n}{K}}({\rm Err}_l - {\rm Err}_{j'})\}
\end{align*}
There are only three different situations. And it is easy to check that  the above event happens with probability  goes to 0 in all of them.

(1) If  $A_j\cap J_{good} = \emptyset $. Let $j_0$ be the index corresponding to the smallest test error, we know that it must be in $A^c_j$.  We apply equation (\ref{eq:tailBound0}) and we know that it is not likely that $\cup_{j'\in A_j}U_1(Z, j')$ happens:
\begin{align*}
&P(\cup_{j'\in A_j}U_1(Z, j')/\cup_{j'\in A_j}U_2(Z, j') )\\
\leq & |A_j|\max_{j'\in A_j}P(Z_{j'}\leq Z_{j_0}+\sqrt{n}( {\rm Err}_{j_0} - {\rm Err}_{j'}))\rightarrow 0
\end{align*}
(2)If $A_j\cap J_{good}\neq \emptyset$, and $J_{good}/A_j = \emptyset$. We apply equation (\ref{eq:tailBound0}) and we know  that  it is not likely that  $\cup_{j'\in A_j}U_2(Z, j')$ does not happen:
\begin{align*}
&P(\cup_{j'\in A_j}U_1(Z, j')/\cup_{j'\in A_j}U_2(Z, j') )\\
\leq & m|A_j|\max_{j'\in A_j}\max_{l\notin A_j}P(Z_{j'}\geq Z_{l}+\sqrt{\frac{n}{K}}({\rm Err}_{l} - {\rm Err}_{j'}))\rightarrow 0
\end{align*}
(3)If $A_j\cap J_{good}\neq \emptyset$, and $J_{good}/A_j \neq \emptyset$. In this case, both $A_j$ and $A_{j'}$ has good model indexes. 

Let $B_{k}:=\{\min_{l\notin A_j} Z_l + \sqrt{n}({\rm Err}_l - {\rm Err}_{k})\leq Z_k < \min_{l\notin A_j} Z_{l} + \sqrt{\frac{n}{K}}({\rm Err}_l - {\rm Err}_{k})\}$ and  $C_{k,l }:=\{ Z_l + \sqrt{n}({\rm Err}_l - {\rm Err}_{k})\leq Z_k <Z_{l} + \sqrt{\frac{n}{K}}({\rm Err}_l - {\rm Err}_{k})\}$. We know that we need only to consider the indexes in $A_j$ which also belong to $J_{good}$ . Applying equation (\ref{eq:tailBound0}):
\begin{align*}
\lim_{N\rightarrow \infty} \sum_{j'\in A_j, j'\notin J_{good}}P(B_{j'})  = 0
\end{align*}
We have
\begin{align*}
&\lim_{n\rightarrow \infty}P(\cup_{j'\in A_j}U_1(Z, j')/\cup_{j'\in A_j}U_2(Z, j') )\\
\leq &\lim_{n\rightarrow \infty} \sum_{j'\in A_j\cap J_{good}}P(B_{j'}) 
\end{align*}
For  every $j'\in A_j\cap J_{good}$, we have
\begin{align*}
&P(B_{j'})\\
\leq& P(C_{j', l'}; l' = \arg\min_{l\notin A_j} Z_l + \sqrt{n}({\rm Err}_l - {\rm Err}_{j'}), l'\in J_{good})\\
+ &P( l' = \arg\min_{l\notin A_j} Z_l +\sqrt{n}\rm{Err}_l, l'\in J_{bad})
\end{align*}
The first term goes to 0 because when both $j'$ and $l'$ are from $J_{good}$, $\sqrt{n}({\rm Err}_{l'}- {\rm Err}_{j'})$ goes to 0 by Assumption \ref{ass:ass1}. The probability of event $C_{j', l'}$ is then the integral of a normal density with inverval goes to 0, and it also goes to 0 itself. The second term goes to 0 as a result of equation (\ref{eq:tailBound0}). As a consequence
$$
\forall j'\in A_j\cap J_{good}, \;\;\lim_{n\rightarrow 0}P(B_{j'}) = 0
$$
and
$$
\lim_{n\rightarrow \infty}P(\cup_{j'\in A_j}U_1(Z, j')/\cup_{j'\in A_j}U_2(Z, j') ) = 0
$$
Similarly,  we have
$$\lim_{n\rightarrow \infty }P(\cup_{j'\in A_j}U_2(Z, j')/\cup_{j'\in A_j}U_1(Z, j')) =0$$
We combine them together to get the desired result
\[
|\sqrt{n}\Delta_{out}(n) - \sqrt{\frac{n}{K}}\Delta_{out}(\frac{n}{K})|\rightarrow 0
\]
which directly lead to
\[
\sqrt{n}(E[\widehat{Q}(\mathcal{R})] - \rm{Err}(\mathcal{R}))\rightarrow 0
\]

\end{proof}

\bibliographystyle{agsm}
\bibliography{debiasCV.bib}
\end{document}